\newtheorem{lemma}{Lemma}
\newtheorem{theorem}[lemma]{Theorem}
\theoremstyle{definition}
\newtheorem{definition}{Definition}
\DeclareMathOperator{\area}{area}
\newcommand{\anc}{anchored}
\newcommand{\qu}{quadrilateral}
\newcommand{\para}{parallelogram}
\newcommand{\uu}{\mathbf{u}}
\newcommand{\ldt}{\mathrel{.\,.}}
\begin{document}

\title{
  The Largest
 Contained
  Quadrilateral
 and \\ the Smallest Enclosing Parallelogram
of 
a Convex Polygon}
\author{G\"unter Rote}
\date{\today}
\maketitle
 \begin{abstract}
   We present a linear-time algorithm for finding the
   quadrilateral of largest area contained in a convex
   polygon, and we show that it is closely related to an old
   algorithm for the smallest enclosing parallelogram of a convex
   polygon.
\end{abstract}

\section{Introduction}
A linear-time algorithm for the \emph{largest quadrilateral contained in a
convex polygon} was proposed in 1979 by Dobkin and
Snyder~\cite{dobkin79}.  This algorithm stood until 2017, when
Keikha, L\"offler, Mohades, Urhausen, and van der Hoog~\cite{KLUvdH17}
constructed a counterexample for which
it 
fails.

A simple linear-time algorithm for the
\emph{smallest parallelogram
enclosing a convex polygon}
was published
in a technical report by
Schwarz, Teich, Welzl, and Evans~\cite{stwe-fmep-94} in 1994, see also
\cite{stvwe-mepa-95}.

We will show that the two problems are closely related, in particular
when they are constrained by \emph{anchoring} them to some specified direction.
The solution of one problem provides an optimality certificate for the
other problem.
We present a conceptually simple algorithm that
treats both problems
 in a symmetric way and solves them
simultaneously in linear time.
The algorithm is based on the ``rotating calipers'' technique
from the early days of computational geometry. Proofs are included,
so that there can be no doubts about its correctness.

The algorithm becomes very simple when specialized for
solving only one of the two problems,
 see \Cref{sec:4gon,
sec:schwarz}.
Linear-time algorithms for
the largest quadrilateral were independently found in 2018
by Vahideh Keikha (personal communication, manuscript in preparation)
and by Kai Jin (personal communication, manuscript previously
submitted to a conference), and they are essentially the same as the
algorithm given here.  According to \cite{bddg-fep-85}, a linear-time
solution is given in unpublished notes of Michael Shamos from
1974~\cite{s-picg-74}.  Given that the solution is so simple, this is
plausible, but I have not been able to confirm it.

While the algorithms that we develop were known,
the observation that the two problems are so closely connected
(\Cref{lem:optimality-conjugate})
appears to be new.
A similar dual connection between the anchored versions of two
problems
exists between the
largest contained and the smallest enclosing \emph{triangle}.
This connection
was first
noted and exploited in the linear-time algorithm of
Chandran and Mount~\cite{cm-paeet-92} for these problems, see also
\cite[Lemmas 4.i and 14]{r-class} for a slightly more stringent treatment in the
style of \Cref{lem:optimality-conjugate}.

\section{Conjugate Pairs}

A \emph{direction} is given by a nonzero vector
$\uu\in \mathbb R^2$. Parallel vectors represent the same direction,
and opposite directions are considered equal.
 Directions are conveniently parameterized by the polar angle $\theta$:
$\uu (\theta)=\binom{\cos\theta}{\sin\theta}$.

We denote the quadrilateral contained in $P$ conventionally by
its four corners $ABCD$. For the parallelogram that surrounds $P$, it will
be better to
denote it by the four \emph{sides} $abcd$, leaving
the corners anonymous, see \Cref{fig:conjugate}.

\begin{definition}
  \begin{enumerate}[a)]
  \item
     A quadrilateral $ABCD$ is \emph{D-anchored} to $\uu$ if
     the diagonal $AC$ is parallel to~$\uu$.%
   \item A parallelogram $abcd$ 
     is
     \emph{S-anchored} to $\uu$ if the two sides $b$ and $d$ are
     parallel to $\uu$.
  \end{enumerate}
\end{definition}
The letter D stands for ``diagonal'', and S stands for ``side''. We will sometimes just say
``\emph{anchored}'' if it is clear from the context which version we mean.
\begin{definition}
  Let $F=ABCD$
  be a quadrilateral, 
  and
  let $G=abcd$
  be a parallelogram. 
  We say that
  $F$
  and  $G$ are \emph{conjugate},
or $(F,G) $ form a \emph{conjugate pair},
  if
  \begin{compactenum}
  \item the diagonal $AC$ is parallel to the sides $b$ and $d$, and
\item each corner $A,B,C,D$ of $F$ lies on the corresponding {side}
 $a,b,c,d$  of $G$.
  \end{compactenum}
\end{definition}
Because of the first condition, the two elements $F$ and $G$ of a
{conjugate pair} are 
anchored to the same direction.  Because of
the 
second condition, 
$F$ is a convex quadrilateral contained in $G$. It is possible that
$F$ degenerates to a triangle because it is not necessarily
strictly convex, and it may even happen that some corners coincide.

\begin{figure}[htb]
  \centering
  \includegraphics{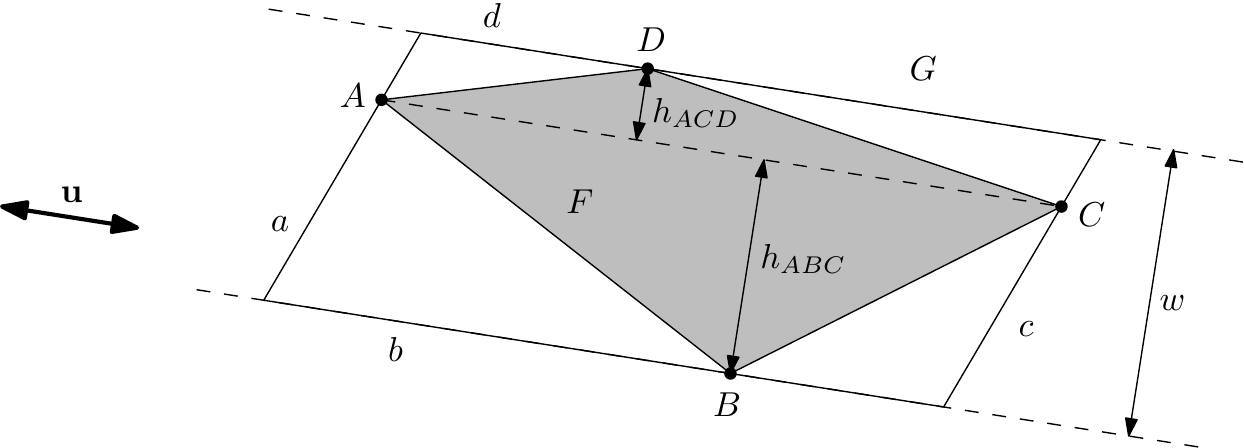}
  \caption{A conjugate pair $(F,G)$. The quadrilateral
$F=ABCD$ is D-anchored and the parallelogram $G=abcd$ is S-anchored
    to the direction $\uu$.
    The heights
$h_{ACD}$ and $h_{ABC}$ of the two triangles into
which $ABCD$ is decomposed by the diagonal $AC$ sum up to
the distance $w$ between the lines through $b$ and $d$. 
  }
  \label{fig:conjugate}
\end{figure}


The following basic geometric lemma considers
 a {conjugate pair}  $(F,G) $ in isolation 
 and proves some optimality properties of $F$ and $G$ with
 respect to each other.

\begin{lemma}
 \begin{enumerate}[ a)]
 \item \label{simple:largest}
 Let  $G$ be a \para, S-anchored to some direction
 $\uu$.
 Then a \qu\ $F$ that is contained in $G$
   and is D-\anc\ to $\uu$ 
 is a {largest} quadrilateral with these properties
if and only if
  $(F,G)$ is a conjugate pair.
 \item \label{simple:smallest}
 Let  $F$ be a \qu, D-anchored to some direction
 $\uu$.
Then a \para\ $G$ that contains $F$
   and is S-\anc\ to $\uu$
 is a {smallest} parallelogram with these properties
 if and only if
  $(F,G)$ is a conjugate pair.
 \item \label{area}
If $(F,G)$ is a conjugate pair,
 the area of $G$ is twice the area of $F$.
 \end{enumerate}
\label{lem:simple}
\end{lemma}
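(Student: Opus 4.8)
The plan is to reduce all three parts to two elementary bounds by working in coordinates adapted to $\uu$. First I would rotate the plane so that $\uu$ is horizontal and place the diagonal $AC$ on the $x$-axis, with $B$ below it and $D$ above; for a convex quadrilateral the triangles $ABC$ and $ACD$ do lie on opposite sides of $AC$. Writing $h_B$ and $h_D$ for the distances of $B$ and $D$ from the line through $AC$, the decomposition of $F$ by its diagonal gives at once
\[
  \area(F)=\tfrac12\,|AC|\,(h_B+h_D).
\]
Since the sides $b$ and $d$ of $G$ are horizontal, I would describe $G$ by its two horizontal bounding lines, at vertical separation $w$, together with the common direction of the slanted sides $a$ and $c$. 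The structural point is that a parallelogram with two horizontal sides has horizontal cross-sections of constant length $L$: as the height increases, both slanted sides shift horizontally by the same amount. Hence $\area(G)=L\,w$.

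Everything then hinges on two inequalities with matching equality cases. As $A$ and $C$ lie at height $0$ inside $G$, they belong to a cross-section of length $L$, so $|AC|\le L$, with equality exactly when $A$ lies on $a$ and $C$ on $c$. As $B$ and $D$ lie in $G$, which fills the horizontal slab of height $w$, while $AC$ meets that slab at some intermediate height, we get $h_B+h_D\le w$, with equality exactly when $B$ lies on $b$ and $D$ on $d$. Equality in both inequalities therefore holds precisely when each corner meets its corresponding side, i.e.\ precisely when $(F,G)$ is a conjugate pair.

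Part c) is then immediate: conjugacy forces $|AC|=L$ and $h_B+h_D=w$, whence $\area(G)=L\,w=|AC|(h_B+h_D)=2\,\area(F)$. For part a), $G$ is held fixed, so $L$ and $w$ are constants; maximizing $\area(F)=\tfrac12|AC|(h_B+h_D)$ means driving the nonnegative factors $|AC|$ and $h_B+h_D$ to their common upper bounds $L$ and $w$, and the product is largest exactly when both are extremal, that is, exactly for conjugate pairs. For part b), $F$ is held fixed, so $|AC|$ and $h_B+h_D$ are constants; minimizing $\area(G)=L\,w$ means driving $L$ and $w$ down to their lower bounds $|AC|$ and $h_B+h_D$, again attained exactly at conjugacy.

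I expect the only genuine care to be required in two places. The first is the constant-width claim for $G$, which is the engine behind the bound $|AC|\le L$ and should be verified from the parallel slanted sides. The second is confirming that an optimal $F$ in part a) really has $B$ and $D$ on opposite sides of $AC$, so that the additive area formula applies rather than a difference; a same-side configuration gives a smaller, non-convex figure and cannot be optimal. The remainder is a routine product optimization in which the equality conditions line up exactly with conjugacy.
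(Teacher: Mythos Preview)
Your proof is correct and follows essentially the same approach as the paper: both arguments express $\area(F)=\tfrac12|AC|(h_B+h_D)$ and $\area(G)=Lw$, derive the two inequalities $|AC|\le L$ and $h_B+h_D\le w$ with equality precisely at conjugacy, and read off all three parts from the resulting product bound. The only differences are cosmetic---you set up explicit horizontal coordinates and argue the constant-cross-section property, whereas the paper works with the projection $\uu^{\perp}\cdot(D-B)$ and simply asserts $|AC|\le|b|$ as ``clear.''
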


\begin{proof}
  See \Cref{fig:conjugate}.  Since $F$ and $G$ are required to be \anc\ to the
  same direction, the first condition for a conjugate pair is always
  satisfied. The question is whether the four sides of $G$ are
  incident to the four corresponding corners of~$F$.

Let $|b|=|d|$ denote the length of the two sides of $G$ that are
parallel to $\uu$. Then, given that
the diagonal $AC$ should be parallel to $\uu$
and contained in $G$, it is
clear that
\begin{displaymath}
|AC|\le
  |b| =
  |d|,  
\end{displaymath}
with equality if and only if the sides $a$ and $c$ touch $A$ and $C$.

Moreover, if $F$ is contained in $G$, the distance between $B$ and
$D$, when projected to the direction perpendicular to~$\uu$, is at
most the distance $w$ between the lines through $b$ and $d$:
 \begin{displaymath}
   \bigl|\uu^\perp \cdot (D-B)\bigr| \le w,
 \end{displaymath}
with equality if and only if the sides $b$ and $d$ touch $B$ and $D$.

 \eqref{simple:largest}
The \qu\ $F=ABCD$ is composed of
 the triangles $ABC$ and
$ACD$, which share the common base $AC$. Therefore,
the area of  $F$ is expressed in terms of the
heights $h_{ABC}$ and $h_{ACD}$ of these triangles as
   \begin{equation}
     \label{eq:area}
\tfrac 12 |AC|\times (h_{ABC}+h_{ACD})
=\tfrac 12 |AC|\times \bigl|\uu^\perp \cdot (D-B)\bigr|
\le 
\tfrac 12 |b| w,
   \end{equation}
   and we have just seen that equality holds if and only if the four
   sides of $G$ touch the corresponding corners of $F$. This
   proves~\eqref{simple:largest}.
The area of the parallelogram $G$ is
\begin{equation} \label{eq.area-parallelogram}
  |b| w =
  |d| w,\end{equation}
which equals twice the area
of $F$
 in~
\eqref{eq:area}, 
and this proves~\eqref{area}.

To prove
\eqref{simple:smallest},
we use 
\eqref{eq:area} in the other direction,
giving a lower bound
on
the area \eqref{eq.area-parallelogram} of any \anc\ parallelogram $G$
containing $F$. 
Again, since equality in
\eqref{eq:area} holds if and only if $F$ and $G$ are conjugate,
\eqref{simple:smallest} has been proved.
\end{proof}


\begin{figure}[htb]
  \centering
  \includegraphics{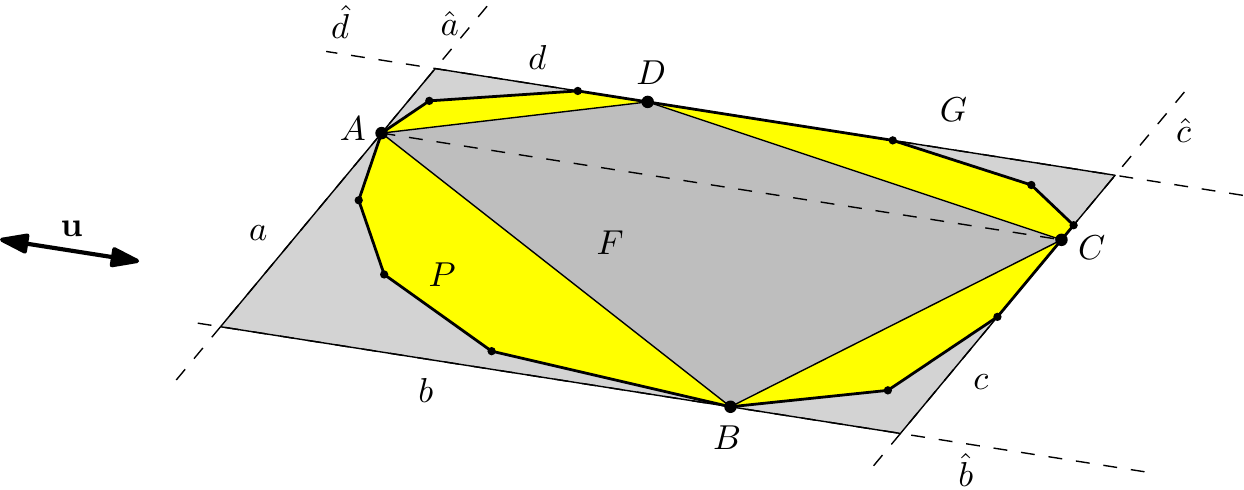}
  \caption{A conjugate pair $(F,G)=(ABCD,abcd)$ anchored to the direction
    $\uu$ and sandwiching a convex polygon $P$.
 }
  \label{fig:4gon}
\end{figure}

The following crucial lemma gives the optimality condition for the
\anc\ versions of the two problems.

\begin{lemma}[Characterization of Optimality by Conjugate Pairs]
\label{lem:optimality-conjugate}
Let $P$ be a convex polygon in the plane, and let $\uu$ be a direction.
  \begin{enumerate}[ a)]
  \item A quadrilateral $F$ 
that is
    D-\anc\ to
    $\uu$ 
and contained in $P$ 
    is a {largest} quadrilateral with these properties
if and only if there is a parallelogram containing $P$ that
    is conjugate to $F$.
\item 
  A parallelogram $G$ 
 enclosing $P$ and S-anchored to $\uu$
  is
  a
{smallest} 
 parallelogram 
 with these properties
 if and only if there is 
    a quadrilateral 
    contained
 in $P$ that is conjugate to~$G$.    
  \end{enumerate}
\end{lemma}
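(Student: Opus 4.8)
The plan is to reduce both statements to the self-contained \Cref{lem:simple} by exploiting containment, and to supply the missing construction in the two ``only if'' directions by a variational, supporting-line argument. The organizing principle is a weak duality: for every \qu\ $F'$ that is D-\anc\ to $\uu$ and contained in $P$, and every \para\ $G'$ that is S-\anc\ to $\uu$ and contains $P$, the chain $F'\subseteq P\subseteq G'$ together with \eqref{simple:largest} and \eqref{area} gives $\area F'\le \tfrac12\area G'$. Hence producing a single conjugate pair $(F,G)$ with $F\subseteq P\subseteq G$ forces equality throughout, so $F$ is simultaneously a largest contained \qu\ and $G$ a smallest enclosing \para. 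The two ``if'' directions are then immediate: if a \para\ $G\supseteq P$ conjugate to the given $F$ exists, then $\area F=\tfrac12\area G$ by \eqref{area}, and every competitor $F'\subseteq P\subseteq G$ satisfies $\area F'\le\tfrac12\area G=\area F$, so $F$ is largest; the ``if'' direction of b) is symmetric, using \eqref{simple:smallest}.

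For the ``only if'' direction of a) I would fix coordinates with $\uu$ horizontal, so that $AC$ is horizontal with $B$ below and $D$ above it, and write $\area F=\tfrac12\,|AC|\,(y_D-y_B)$, using that the two triangle heights sum to the vertical extent $y_D-y_B$ independently of the height of the diagonal. Optimality then decouples into independent conditions on the two factors. Maximality of $y_D-y_B$ forces $B$ and $D$ to be extreme points of $P$ in the direction $\uu^\perp$; the horizontal supporting lines of $P$ there are the candidate sides $b$ and $d$, and $P$ lies between them. Maximality of $|AC|$ forces $AC$ to be a longest horizontal chord of $P$; since the chord-length function is concave in the height and vanishes at the two extreme heights, its maximum is attained in the interior, and there the left and right boundary arcs acquire a common supporting slope (the chord stops lengthening). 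I would phrase this as the subdifferential of the left boundary meeting the superdifferential of the right boundary, yielding parallel supporting lines $a\ni A$ and $c\ni C$ with $P$ between them. The four supporting lines bound a \para\ $G\supseteq P$ that is S-\anc\ to $\uu$ and meets $F$ in the required incidences, hence is conjugate to $F$.

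The ``only if'' direction of b) is the mirror image. Minimality of the enclosing \para\ forces its horizontal sides $b,d$ to be supporting lines at the $\uu^\perp$-extreme points of $P$, so $w=y_{\max}-y_{\min}$ is fixed and $\area G=w\times(\text{horizontal width between the slanted sides})$. Minimizing that width over the common slope $t$ of the slanted supporting lines is the crux: writing the width as $\max_{P}(x-ty)-\min_{P}(x-ty)$ and differentiating in $t$, the optimality condition becomes that the left and right contact points $A$ and $C$ have equal height, i.e.\ $AC\parallel\uu$. Selecting these contact points, together with contact points $B,D$ on the horizontal sides, then yields a \qu\ $F=ABCD\subseteq P$ that is D-\anc\ to $\uu$ and conjugate to $G$.

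The main obstacle is that $P$ is a polygon, so the boundary arcs are only piecewise linear and the relevant extrema may be attained on an edge or at a vertex where the supporting direction is not unique. I therefore expect the real work to lie in replacing the naive derivative computations by one-sided derivatives and subdifferentials and in checking that the interleaving condition (resp.\ the equal-height condition) still selects valid parallel supporting lines (resp.\ contact points) at the optimum. A second source of care is degeneracy: the extremal chord or the width-minimizing slope may be realized over an interval, $F$ may collapse to a triangle, or corners may coincide; I would confirm that in each such case the construction still returns a genuine conjugate pair, relying on the fact that conjugacy was deliberately defined to permit these degeneracies.
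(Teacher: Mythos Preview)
Your sufficiency (``if'') arguments coincide with the paper's: both simply sandwich $F'\subseteq P\subseteq G$ (resp.\ $F\subseteq P\subseteq G'$) and invoke \Cref{lem:simple}. The weak-duality framing you use is just a repackaging of this.

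For necessity (``only if'') you take a genuinely different route. The paper does \emph{not} analyse first-order optimality conditions; instead it postpones necessity until after the rotating-calipers construction (\Cref{lemma-AC-BD} together with \Cref{largest-anchored}) has shown that for every direction $\uu$ a conjugate pair $(F,G)$ with $F\subseteq P\subseteq G$ exists. Given any other optimal $F'$, it then observes that $F'\subseteq G$ has the same area as $F$, and the equality case of \Cref{lem:simple}\ref{simple:largest} forces $(F',G)$ to be conjugate. Your approach, by contrast, reads off the conjugate partner directly from the optimum: the decoupling $\area F=\tfrac12|AC|\,(y_D-y_B)$ reduces to maximising a horizontal chord, and the concavity/subdifferential argument produces the parallel supporting lines through $A$ and $C$; dually, minimising $\max_P(x-ty)-\min_P(x-ty)$ over the slope $t$ and applying the envelope theorem yields $y_A=y_C$. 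This is correct, and it is exactly the kind of ``easy direct proof'' the paper alludes to after stating the lemma. Its advantage is that it is self-contained and extends verbatim to arbitrary convex bodies, not just polygons; the paper's advantage is that, since the algorithm must build \Cref{lemma-AC-BD} anyway, necessity falls out in two lines with no new analysis and no subdifferential bookkeeping at vertices.
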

\goodbreak
\begin{proof}[Proof of sufficiency.]
In both cases, there is a conjugate pair $(F,G)$ such that
the convex region $P$ is sandwiched between them:
$F\subseteq P \subseteq G$.

a) By
\Cref{lem:simple}\ref{simple:largest}, $F$ is even the largest D-\anc\
\qu\ inside the larger region $G\supseteq Q$. Thus, there cannot be a
larger \anc\ \qu\ in~$P$.

b) By
\Cref{lem:simple}\ref{simple:smallest}, $G$ is even the smallest S-\anc\
\para\ that encloses the smaller region $F\subseteq Q$. Thus, there cannot be a
smaller \anc\ \para\ enclosing~$P$.
\end{proof}
Necessity of the conditionsis not needed for the correctness of our
algorithm, and it will only be proved later
as an easy consequence of \Cref{lemma-AC-BD}, see page~\pageref{necessity}.
Alternatively, there are easy direct proofs
(cf.~\cite[Lemma~2]{stwe-fmep-94}), even for arbitrary convex regions.

The lemma is also a manifestation of
linear programming duality, since the problem of finding the longest
chord $AC$ with a given
direction can be formulated as a linear program.



\section{Constructing all Conjugate Pairs in Linear Time}
\label{sec:conjugate}
The idea is to construct
conjugate pairs
$(F(\theta),G(\theta))$ with $F(\theta)\subseteq P\subseteq G(\theta),$
%
for all directions $\uu(\theta) $ in the range
$0^\circ\le\theta \le 180^\circ$.  By the sufficient criterion of
\Cref{lem:optimality-conjugate}, these are largest \anc\ contained
\qu s and smallest \anc\ enclosing \para s. Hence, the overall
largest contained \qu s and smallest enclosing \para s will be among
them.  \looseness-1

The following straightforward observation separates the task of finding an \anc\ conjugate
pair $(F,G)$ into two subtasks.
The first
 involves $A$, $C$, $a$, and $c$, and it is concerned with the
direction 
 of the diagonal $AC$.
The other task
involves $B$, $D$, $b$, and $d$, and it is concerned with the
direction 
 of the sides $b$ and $d$.
 A pair of points on the boundary of a convex region $P$ that admits parallel supporting lines is
called \emph{antipodal}.

\begin{lemma}\label{largest-anchored}
  Let $P$ be a convex region in the plane and $\uu$ be a direction.
  A conjugate pair
  $(ABCD,abcd)$
with $ABCD\subseteq P\subseteq abcd$
     and {anchored} to $\uu$ is found as follows, see \Cref{fig:4gon}.
Here, the \para\ $abcd$ is defined by two pairs of parallel lines $\hat a,\hat
c$ and $\hat b,\hat d$:
     \begin{enumerate}[\indent a)]
     \item \label{AC}
$AC$ is an antipodal pair of $P$ parallel to
       $\uu$, with supporting lines $\hat a$ and $\hat c$,
     \item  \label{BD}
$\hat b$ and $\hat d$ are the two opposite lines of support parallel
       to $\uu$, and
$B$ and $D$ are points where these lines touch $P$.
\textup(Thus, $BD$ is also an antipodal pair.\textup)
 \qed
\end{enumerate}
\end{lemma}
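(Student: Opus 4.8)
The plan is to check, in order, the two containments $ABCD\subseteq P\subseteq abcd$ and then the two defining conditions of a conjugate pair. First I would fix coordinates with $\uu$ along the $x$-axis, so that $\uu^\perp$ is vertical. The two opposite supporting lines of part~(b), say $\hat b$ (upper) and $\hat d$ (lower), are then horizontal and bound a horizontal slab of positive width $w$ that contains $P$; the supporting lines $\hat a,\hat c$ of part~(a) are parallel to one another and bound a second slab containing $P$. As long as these two slab directions differ, their intersection is a bounded parallelogram $abcd$, and $P\subseteq abcd$ is immediate, since each of the four bounding lines is a supporting line of $P$. I would flag the non-degeneracy (that $\hat a,\hat c$ are not parallel to $\uu$) as something to return to.

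The incidences are the cleanest part, and I would prove all four at once. The line $\hat a$ meets $abcd$ in exactly one of its sides, namely the segment cut out of $\hat a$ by the slab between $\hat b$ and $\hat d$, which is the side $a$; the same holds for $b,c,d$ on $\hat b,\hat c,\hat d$. Since $A\in P\subseteq abcd$ and $A\in\hat a$, it follows that $A$ lies on the side $a$, and the identical argument places $B,C,D$ on $b,c,d$. This is the second condition for a conjugate pair. The first condition, that the diagonal $AC$ be parallel to the sides $b$ and $d$, is built into the construction: $AC\parallel\uu$ by part~(a), while $b,d$ lie on the horizontal lines $\hat b,\hat d$.

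It then remains to see that $ABCD$ is a genuine quadrilateral, correctly labelled so that $AC$ and $BD$ are its diagonals, and this is where I expect the real work. All four points lie on $\partial P$, hence are in convex position, so their convex hull is a quadrilateral contained in $P$; the only issue is the cyclic order. Writing $A,C$ at a common height $y_0$ (as $AC$ is horizontal) and noting that $B$ lies weakly above the line $AC$ and $D$ weakly below it, I see that $AC$ separates $B$ from $D$. Since a chord of a convex quadrilateral separates the remaining two vertices precisely when it is a diagonal, $AC$ is a diagonal, and then $BD$ is automatically the other diagonal. Thus $ABCD$ is the convex quadrilateral inscribed in $P$ with diagonals $AC$ and $BD$, giving $ABCD\subseteq P$, and $(ABCD,abcd)$ is a conjugate pair.

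Two loose ends must be tied off. For the existence of the antipodal pair in part~(a), I would rotate a pair of parallel supporting lines of $P$ through a half-turn and apply the intermediate value theorem to the direction of the segment joining their two contact points: this direction also turns through a half-turn and must therefore equal $\uu$ at some moment, and the contact points there serve as $A$ and $C$. For the non-degeneracy I would observe that $\hat a\parallel\uu$ can occur only when $A$, and hence also $C$, lies on a horizontal supporting line, i.e.\ when the chord $AC$ runs along a horizontal edge of $P$; in that boundary case the construction collapses to a triangle, which the definition of a conjugate pair explicitly allows, so nothing is lost. The principal obstacle is therefore not any estimate but the bookkeeping that converts the clean incidence statement into a correctly oriented, possibly degenerate, inscribed quadrilateral.
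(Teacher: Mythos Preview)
Your proposal is correct; the paper itself gives no proof at all, marking the lemma with a bare \qed\ and calling it a ``straightforward observation'' that simply separates the construction into its two independent halves. What you have written is therefore not a different approach so much as an honest verification of something the paper leaves to the reader: that the four supporting lines really do cut out a bounded parallelogram containing $P$, that each of $A,B,C,D$ lands on its matching side, and that the cyclic order comes out right so that $AC$ is a diagonal. Your separation argument (that $AC$ is horizontal while $B$ lies weakly above it and $D$ weakly below) is the clean way to do this last step.

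One small point: in your treatment of the degenerate case you say ``the construction collapses to a triangle, which the definition of a conjugate pair explicitly allows,'' but the actual obstruction when $\hat a\parallel\uu$ is that the \emph{parallelogram} $abcd$ fails to be bounded, not that $ABCD$ degenerates. The resolution is rather that, for a convex region $P$ with nonempty interior, the supporting lines $\hat a,\hat c$ at an antipodal pair $A,C$ with $AC\parallel\uu$ can never themselves be parallel to $\uu$: if they were, $P$ would be squeezed into the single line through $A$ and $C$. So the non-degeneracy you flagged is automatic in the setting of the paper (convex polygons), and your intermediate-value argument for the existence of the antipodal pair in direction $\uu$ is exactly what the paper later carries out algorithmically in its rotating-calipers section.
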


As stated in the following lemma, whose proof will be given in \Cref{sec:calipers},
 both tasks can easily be carried out with the
classical rotating-calipers technique.
 We assume that $P$ is a convex polygon,
given by the ordered list of
its $n$ vertices. 



\begin{lemma}\label{lemma-AC-BD}
\begin{enumerate}[a)]
\item 
\label{lemma:AC}
  In $O(n)$ time, one can find a sequence of
 direction angles
 $0^\circ=\theta_0<\theta_{1}<\dots<
\theta_{i-1}<\theta_{i}<\dots<\theta_n<\theta_{n+1}
=180^\circ$,
and a corresponding sequence of vertex-edge pairs
$(Q_1,e_1),(Q_2,e_2),\dots,(Q_{n+1},e_{n+1})$,
such that
for any~$\theta$
 in each
closed interval $[\theta_{i-1}\ldt \theta_{i}]$,
an antipodal segment
$A(\theta)C(\theta)$ parallel to
       $\uu(\theta)
$
can be found by intersecting the line through
$Q_i$
 parallel to
       $\uu(\theta)$ with the edge $e_i$.
 The lines parallel to $e_i$ through 
$Q_i$ and $e_i$
 are the corresponding supporting lines.
\item 
\label{lemma:BD}
  In $O(n)$ time, one can find a sequence of 
 direction angles
 $0^\circ=\phi_0<\phi_{1}<\dots<
\phi_{i-1}<\phi_{i}<\dots<\phi_k<\phi_{k+1}=180^\circ$, 
and a corresponding sequence of antipodal
 pairs of vertices
$(B_1,D_1),(B_2,D_2),\allowbreak\ldots,\allowbreak(B_{k+1},D_{k+1})$,
with
 $k\le n$,
 such that for any $\phi$
 in each
closed interval $[\phi_{i-1}\ldt \phi_{i}]$,
the lines through
$B_i$ and $D_i$ parallel to the direction $\uu(\phi)$ are
 supporting lines.
\end{enumerate}
\label{lemma-AC-BD}
\end{lemma}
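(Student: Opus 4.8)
The plan is to treat both parts as instances of the rotating-calipers method: maintain a pair of opposite supporting lines of $P$ and advance them through discrete events as a direction rotates over a half-turn. Part~\ref{lemma:BD} is the classical antipodal-vertex enumeration; part~\ref{lemma:AC} is its dual, where the \emph{same} calipers are read off by the direction of the chord $AC$ rather than by the direction of the supporting lines.

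For part~\ref{lemma:BD} I would initialize at $\phi=0^\circ$ with the two horizontal supporting lines of $P$, touching the lowest and highest vertices $B_1,D_1$. At each step the next breakpoint $\phi_i$ is the smaller of the two angles at which one caliper first becomes parallel to its succeeding polygon edge; I advance that caliper, which moves its contact point to the next vertex. The breakpoints are thus exactly edge directions, and since the two lines together sweep a half-turn and exchange roles, each of the $n$ edges triggers at most one event; hence $k\le n$ and the whole sweep costs $O(n)$. Throughout each arc $[\phi_{i-1}\ldt\phi_i]$ the contact vertices $B_i,D_i$ stay fixed, and the lines through them in any direction $\uu(\phi)$ in that arc remain supporting, which is exactly the claim.

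For part~\ref{lemma:AC} I would run the same calipers, maintaining the antipodal opposite supporting lines $\hat a\parallel\hat c$ of common direction $\sigma$, but record only the configurations in which one line lies along an edge $e_i$ while the other touches a single vertex $Q_i$; these are precisely the antipodal vertex--edge pairs. The structural heart of the argument is a monotone staircase correspondence between the support direction $\sigma$ and the chord direction $\theta=\angle A(\theta)C(\theta)$. While the configuration is vertex--edge, $\sigma$ is frozen at $\mathrm{dir}(e_i)$ and, as the intersection point $C=\hat\ell\cap e_i$ of the line $\hat\ell$ through $Q_i$ in direction $\uu(\theta)$ slides across $e_i$, the chord direction $\theta$ sweeps the whole cone of directions subtended by $e_i$ at $Q_i$ (a tread); at a vertex--vertex configuration $\theta$ is frozen and $\sigma$ sweeps an arc (a riser). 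Consequently, as $\theta$ runs over $[0^\circ\ldt180^\circ]$ the active pair passes through exactly one vertex--edge pair per edge, with the vertex pointer $Q_i$ and the edge pointer $e_i$ each advancing monotonically around $\partial P$; this yields the $O(n)$ bound and equivalently lets one advance two pointers directly by $\theta$. Counting the treads gives one per edge, i.e.\ $n$, and the starting direction $\uu(0^\circ)$ generically splits one tread into the two end-intervals $[0^\circ\ldt\theta_1]$ and $[\theta_n\ldt180^\circ]$, for the claimed total of $n+1$ intervals and pairs.

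I expect the main obstacle to be making the staircase correspondence rigorous: showing that within each tread the antipodal segment really is the $Q_i$-to-$e_i$ chord for \emph{every} $\theta$ in the interval (so $(Q_i,e_i)$ stays antipodal throughout and the intervals tile $[0^\circ\ldt180^\circ]$ without gaps or overlaps), and handling the bookkeeping at the transitions, where the roles of the frozen vertex and the sliding edge switch between the two sides of the chord. The residual degeneracies---edges parallel to $\uu$, a pair of parallel edges producing an edge--edge tread, and the start direction landing exactly on a riser---must be checked but do not affect the $O(n)$ running time or the count.
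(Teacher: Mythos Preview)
Your proposal is correct and follows essentially the same rotating-calipers approach as the paper: for part~\ref{lemma:BD} the standard antipodal-vertex sweep with breakpoints at edge directions, and for part~\ref{lemma:AC} the observation that between consecutive vertex--vertex events one endpoint is fixed while the other slides along an edge, so the chord direction $\theta$ sweeps an interval; your ``staircase'' of treads and risers is exactly the alternation the paper formalizes. The only presentational difference is that the paper introduces a single continuous parameter $t=\phi+s$ (support angle plus arc length travelled) to make the motion of $A(t),C(t)$ continuous, then deletes the stationary intervals and reparameterizes by $\theta$, whereas you describe the tread/riser structure directly; both yield the same $n$ treads and the same $O(n)$ bound.
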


We remark that the sequence $(B_i,D_i)$ 
does not necessarily include \emph{every}
 pair of antipodal vertices:  For each pair of opposite parallel
edges of $P$, there are two pairs of antipodal vertices which admit
parallel supporting lines of only one direction. These pairs don't appear in the list.

 It is now clear how to proceed
with the help of \Cref{lemma-AC-BD}.
Since the areas of a conjugate pair are related
by \Cref{lem:simple}\ref{area},
let us
ignore the enclosing
\para s
$a(\theta)b(\theta)c(\theta)d(\theta)$
 and 
 concentrate on the inner \qu s
$A(\theta)B(\theta)C(\theta)D(\theta)$.
 We merge the lists
of breakpoints $\theta_0,\theta_1,\ldots$
and $\phi_0,\phi_1,\ldots$ and obtain a list of $O(n)$
intervals such that in each interval, there are largest \anc\ \qu s
$
A(\theta)B(\theta)C(\theta)D(\theta)$ with a fixed structure:
The points $B(\theta)=B$ and $D(\theta)=D$ are fixed vertices.
On the diagonal
$A(\theta)C(\theta)$, one point, say
$A(\theta)=A$, is fixed to a vertex $Q_i$, while the other point
$C(\theta)$ moves on a fixed edge $e_i$.

In a quadrilateral
$ABC(\theta)D$
with one moving point $C$, the area is a linear function of $C$.
As $\theta$ increases,
the corner $C(\theta)$ moves monotonically on some edge $e_i$, and 
therefore, the extremes are attained at the endpoints of the interval.
We thus just need to evaluate the area at all interval endpoints
$\theta_i$ and
$\phi_i$ of the merged sequence and pick the largest or smallest one.
Since each endpoint belongs to two intervals, the 
\qu\ $
A(\theta)B(\theta)C(\theta)D(\theta)$ prescribed by
\Cref{lemma-AC-BD
} may be ambiguous,
but this does not matter. All these \qu s have the
same area.

\begin{theorem}
  \begin{enumerate}[ a)]
  \item 
  The 
 quadrilateral of largest area contained
in a
  convex polygon can be found in linear time.
\item 
  The 
\para\ of smallest area enclosing a
  convex polygon can be found in linear time.
\qed
  \end{enumerate}
\end{theorem}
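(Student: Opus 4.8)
The plan is to solve all \emph{anchored} instances simultaneously and then recover the two global optima from them. First I would observe that, for a fixed direction $\uu$, the sufficiency direction of \Cref{lem:optimality-conjugate} already certifies optimality: any conjugate pair $(F,G)$ with $F\subseteq P\subseteq G$ furnishes a largest D-\anc\ contained \qu\ $F$ together with a smallest S-\anc\ enclosing \para\ $G$. \Cref{largest-anchored} guarantees that such a pair exists for every direction and shows how to assemble it from the antipodal data, while \Cref{lemma-AC-BD} shows that this data is produced by rotating calipers. Thus, letting $\theta$ range over $0^\circ\le\theta\le 180^\circ$, I obtain a family of conjugate pairs $(F(\theta),G(\theta))$ that realize the \anc\ optima in \emph{every} direction.

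Next I would reduce both \emph{global} problems to extremizing a single function of $\theta$. The globally largest contained \qu, call it $F^*$, has a diagonal parallel to some direction $\uu(\theta^*)$ and is therefore D-\anc\ to it; since $F(\theta^*)$ is the largest such \qu\ while every $F(\theta)$ is contained in $P$, we get $\max_\theta\area(F(\theta))=\area(F^*)$. Symmetrically, the smallest enclosing \para\ has its two parallel sides in some direction and is hence S-\anc, so its area equals $\min_\theta\area(G(\theta))$. By \Cref{lem:simple}\ref{area} we have $\area(G(\theta))=2\,\area(F(\theta))$, so both answers can be read off from the single function $\theta\mapsto\area(F(\theta))$: its maximum solves part~a) and twice its minimum solves part~b).

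I would then use the combinatorial structure of \Cref{lemma-AC-BD} to pin these extrema down. Merging the breakpoint sequences $(\theta_i)$ and $(\phi_i)$ yields $O(n)$ closed intervals on each of which the structure of $F(\theta)$ is frozen: $B$ and $D$ are fixed vertices, one diagonal endpoint is a fixed vertex $Q_i$, and only the other endpoint $C(\theta)$ varies, tracing a connected sub-segment of the edge $e_i$. The area of $ABC(\theta)D$ is a linear function of the position of $C$, and a linear function on a segment attains both of its extremes at the endpoints; hence on each interval the maximum and the minimum of $\area(F(\theta))$ occur at the two interval endpoints. Consequently the global maximum and minimum over $[0^\circ,180^\circ]$ both lie among the $O(n)$ merged breakpoints.

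The algorithm is then immediate: run the rotating calipers of \Cref{lemma-AC-BD} in $O(n)$ time, evaluate the area of the conjugate pair at each of the $O(n)$ merged breakpoints in $O(1)$ time apiece, and output the \qu\ of maximum area for part~a) and the \para\ conjugate to the minimum-area \qu\ for part~b). I expect the one delicate point to be the monotonicity claim---that $C(\theta)$ really moves along $e_i$ in a single direction as $\theta$ increases, so that the linear area has no interior extremum and the merged list genuinely freezes the \emph{entire} combinatorial structure on each subinterval. Once this is checked, the rest follows mechanically from the cited lemmas.
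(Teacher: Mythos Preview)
Your proposal is correct and mirrors the paper's own argument essentially step for step: build the family of conjugate pairs $(F(\theta),G(\theta))$ via \Cref{largest-anchored} and \Cref{lemma-AC-BD}, invoke the sufficiency direction of \Cref{lem:optimality-conjugate} to certify that these realize the anchored optima, merge the two breakpoint sequences into $O(n)$ intervals on which the combinatorial structure is frozen, use linearity of the area in the one moving corner to reduce to the interval endpoints, and read off both answers via \Cref{lem:simple}\ref{area}. The monotonicity of $C(\theta)$ along $e_i$ that you flag as the delicate point is exactly the one nontrivial claim the paper also asserts (without further justification) to push the extremes to the interval endpoints.
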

 Pseudocode for the algorithm is given in
 \Cref{sec:algorithm-pseudocode},
 and
prototype implementations of the algorithms in 
 \Cref{sec:algorithm-pseudocode,sec:4gon,sec:schwarz}
in \textsc{Python}
 are contained in the source files
 of this preprint.
\section{Discussion}
It is perhaps instructive to reflect on some features of this algorithm and compare it
to other approaches.  An easy property of largest \qu s (in
fact, largest $k$-gons for any~$k$) contained in a polygon~$P$ is
the \emph{vertex property}:
Their corners must be vertices of~$P$. 
 Our algorithm does not use this property at all.
It considers an
 infinite family $
A(\theta)B(\theta)C(\theta)D(\theta)$ of \qu s.
Even after reducing them to a discrete set of directions
(the interval endpoints
$\theta_i$ and~$\phi_i$), many of these candidates don't fulfill the vertex
property.
Most previous algorithms for largest contained $k$-gons, and in particular,
the algorithms of Dobkin and
Snyder~\cite{dobkin79}, consider only $k$-gons with the vertex property.
By concentrating on the vertex property too early, one may miss
useful avenues to finding good and simple algorithms.

We may of course still use the vertex property as an ``afterthought'' to introduce shortcuts and
simplify the algorithm.  
For example, once the
point $C(\theta)$ lies in the middle of an edge, one can skip the area
computations and fast-forward $\theta$ until $C(\theta)$ arrives at a
vertex.
(For the problem of the largest contained triangle, the analogous step is described
in \cite[Section~8]{r-class}.)

There are other possible simplifications.
 The two lists of breakpoints $\theta_0,\theta_1,\ldots$
and $\phi_0,\phi_1,\ldots$ need not be computed separately in
advance. They can be generated on the fly
as they are processed,
after an
appropriate initialization.
We have described the algorithm in terms of angles $\theta$ for
convenience. When implementing the algorithm on a computer, it is better to
avoid angle calculations and
use 
 direct comparisons of vector directions or signed areas,
see \Cref{sec:primitive}.
(Anyway, since the problem is invariant under affine transformations,
angular  quantities are not really suited to the problem.)

In \Cref{sec:4gon}, we show the whole simplified algorithm for the
largest contained quadrilateral. This algorithm is actually so simple
that one can as well derive it directly from the property that
$AC$ must form an antipodal vertex pair, without going through the
continuous family
$A(\theta)B(\theta)C(\theta)D(\theta)$.
The same remark holds for the
smallest enclosing \para.
\Cref{sec:schwarz} shows a variation of the algorithm
following~\cite{stwe-fmep-94}
 that is just as simple.

\section{Rotating Calipers}
\label{sec:calipers}

\begin{proof}[Proof of \Cref{lemma-AC-BD}]
For part \eqref{lemma:AC}, we need antipodal points for all directions.
An algorithm for listing
all antipodal pairs of \emph{vertices} of a convex polygon $P$ is
given in
\cite[Section~4.2.3]{PS}.
We just need to ``fill the gaps'' in order to
get antipodal pairs for a continuous range of directions

    Let $f
$ and $g
$ be two opposite lines of support
in direction $\uu(\phi)$, see \Cref{fig:movement}. We will increase 
 $\phi$
from $\phi=0^\circ$ to $\phi= 180^\circ$ and maintain the points $A$ and
$C$ where they touch $P$. Since we want these points to move
continuously,
we parameterize the process by a new parameter $t=\phi+s$, where $s$ it the combined
distance moved by $A(t)$ and $C(t)$ along the boundary of $P$ since
the beginning.
We start with $A(0)$ and $C(0)$ as the lowest and highest points of
$P$.
In case of ties, we take the leftmost lowest and the rightmost highest
point.
\Cref{fig:movement}a shows $\phi$ and the distances
 $s_A$ and $s_C$ moved by $A$ and $C$, from which $s$ is computed as $s=s_A+s_C$.

\begin{figure}[htb]
  \centering
  \includegraphics{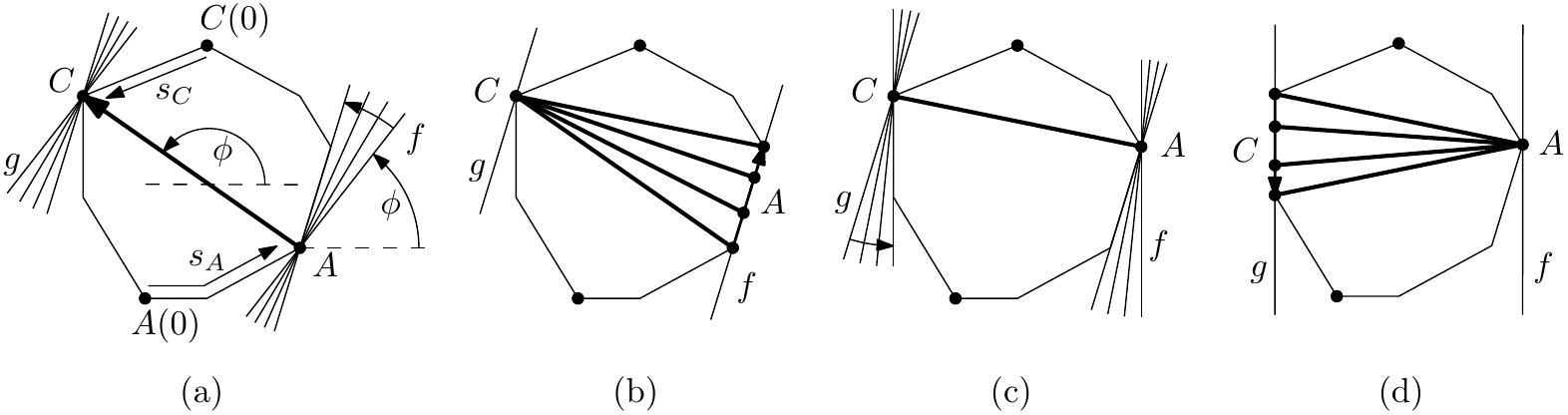}
  \caption{Four successive stages of the circular sweep: (a) The antipodal points $A=A(t)$ and $C=C(t)$
together with the parallel support lines $f=f(t)$ and $g=g(t)$,
for the parameter $t=\phi+s_A+s_C$.
The angle
$\phi$ increases until $f$ or $g$ hits an edge.
(b) The line $f$ has hit an edge. $C$ is stationary and $A$ slides along
this edge.
(c)~$\phi$~increases further, and $g$ hits an edge.
(d)~$A$ is stationary and $C$ moves.}
  \label{fig:movement}
\end{figure}

Now we start to increase $t$.
Whenever $\uu(\phi)$ is parallel to an edge of $P$, we continuously
advance $A(t)$ or $C(t)$ to the other endpoint of this edge,
increasing $s$ while leaving $\phi$ constant.
If $P$ has two sides
parallel to $\uu(\phi)$, we arbitrarily use the convention that we first
advance $A(t)$ and then $C(t)$.
Now, $f(t)$ and $g(t)$ are ready to tilt around the vertices $A(t)$ and $C(t)$,
increasing $\phi$ while $s$ remains constant, until
$f(t)$ or $g(t)$ hits the next edge.

We continue this process in a loop until $\phi=180^\circ$.  At this
point, $A$ and $C$ have swapped places, and $s$ equals the perimeter
of $P$.  The segment $A(t)C(t)$ has completed a 
rotation by $180^\circ$.

The points $A(t)$ and $C(t)$ move continuously in counterclockwise
direction as a function of~$t$, and
for every $t$,
the points $A(t)$ and $C(t)$ are antipodal,
as witnessed by supporting lines $f(t)$ and $g(t)$.
Thus we have achieved our primary goal of finding an antipodal pair
for every direction.

The parameter range of $t$ is decomposed into intervals where
$A$ remains stationary,
$C$ remains stationary, or both points remain stationary.
We cut out those intervals where none of the points move.
For the remaining intervals, we choose yet another 
parameterization,
namely by the direction
$\uu(\theta)$ pointing from $A(t)$ to $C(t)$.

Each of the remaining intervals is characterized by one stationary
point, $Q_i$, while the other point moves on a fixed edge, $e_i$.
If $\uu(\theta)$ is the direction pointing from $A(t)$ to $C(t)$,
The breakpoints $\theta_{i-1}$ and 
$\theta_i$
 are the directions at the end of the intervals, 
when both $A(t)$ and $C(t)$ are at vertices.
It only remains to rearrange the interval breakpoints cyclically modulo
$180^\circ$ in order to start with $\theta_0=0^\circ$.
Since each interval advances either $A$ or $C$ by one vertex and
 $A$ and $C$ together make a full tour around $P$, the number of
interval breakpoints
$\theta_i$
 is~$n$.

\smallskip

 Part \eqref{lemma:BD} of the lemma is straightforward. In fact, it can be
 obtained by the same circular sweep as above, with the
 straightforward parameterization by the angle $\phi$,
 concentrating only on the points $A(\phi)$ and $C(\phi)$
 where the 
 supporting lines in direction
 $\uu(\phi)$ touch $P$.
 (These points will take
 the roles of $B_i$ and $D_i$ in the lemma.)



The breakpoint directions $\phi_i$ are therefore the directions
where $A(\phi)$ or $C(\phi)$ jumps.
These are the directions
 for which
  $\uu(\phi_i)$ is parallel to some edge 
of $P$.
There are at most $n$ such angles.
 The sequence $\phi_1,\phi_2,\dots$ is obtained by
merging the two lists of edge directions obtained from traversing the
left boundary of $P$ and the right boundary of $P$ counterclockwise,
between the extreme points in vertical direction.
\end{proof}

\begin{proof}[Proof of necessity in
\Cref{lem:optimality-conjugate}.]\label{necessity}

a) Assume that $F'$ is a largest
quadrilateral
that is
    D-\anc\ to
    $\uu$ 
and contained in $P$.
\Cref{lemma-AC-BD} together with
\Cref{largest-anchored}
implies that, for this direction $\uu$,
there exists an \anc\ conjugate pair $(F,G)$
with $F\subseteq P \subseteq G$.
 By the sufficiency part of
\Cref{lem:optimality-conjugate}, which has already been proved,
$F$ is a largest \anc\ \qu\ contained in $P$,
and therefore of the same area as~$F'$.
 By
\Cref{lem:simple}\ref{simple:largest},
$F$ is even a largest \anc\ \qu\ contained in the larger area~$G$.
By the necessity statement in the same lemma,
 since $F'$ is also contained in $G$,
$F'$ can only have the same area as $F$
 if it forms a conjugate pair $(F',G)$ with $G$. This proves the
 necessity for Part~(a).
The proof of Part (b) is completely analogous.
\end{proof}

\bibliographystyle{plainurl}
\bibliography{./b}

\goodbreak
\appendix
\section{The Algorithm in Pseudocode}
\label{sec:algorithm-pseudocode}

For completeness, we give the pseudocode for
our algorithm.
We assume that the convex polygon
 $P=(p_1,p_2,\ldots,p_n)
$ is given by the ordered list of
its $n$ vertices in counterclockwise order.
We assume that $n\ge 3$, and
we look for a largest contained \qu\ $ABCD=p_ap_bp_cp_d$ in 
counterclockwise order, and a smallest enclosing \para,
also in counterclockwise order.
Indices of polygon vertices are considered modulo~$n$.

In contrast to the algorithm that is sketched in
\Cref{sec:conjugate}, we don't start with $\theta_0=0^\circ$, but
we start more conveniently with the antipodal pair defined by $A=p_1$ and the point $C$
opposite to the edge $p_1p_2$.

\subsection{Primitive Operations}
\label{sec:primitive}

The basic
predicate of this algorithm is
a comparison between two directions
$\uu=\binom {x_1}{y_1}$ and $\mathbf{v}=\binom {x_2}{y_2}$, which 
can be calculated
with only two multiplications
 as the sign of a $2\times2$ determinant that expresses
the signed area of the parallelogram spanned by $\uu$ and $\mathbf{v}$:
\begin{displaymath}
  \det\left(
\mathbf{u},\mathbf{v}
\right) :=
  \left|
    \begin{matrix}
      x_1&x_2\\
      y_1&y_2\\
    \end{matrix}\right|
  = x_1y_2-x_2y_1
=-  \det\left(
\mathbf{v},\mathbf{u}
\right)
\end{displaymath}
This is positive if $\mathbf{v}$ lies counterclockwise from $\uu$.
The area of a 
quadrilateral $ABCD$ is
$\pm\frac12\det\bigl((C-A),(D-B)\bigr)$.

Frequently, the algorithm makes comparisons between triangle areas over a common
basis. This should also be
calculated as a
$2\times2$ determinant.
For example,
$\area p_a p_{a+1}p_{c+1}-
\area p_a p_{a+1}p_{c} = \frac12\det\bigl((p_{a+1}-p_a),(p_{c+1}-p_c)\bigr)$
if the two triangles are oriented counterclockwise.
Since we find the formulation involving triangle areas
geometrically more appealing, we have not replaced it in our pseudocode.

\subsection{Largest and smallest anchored \qu s}

\begin{lemma}[{\cite[Lemma~1]{stwe-fmep-94}}]
\label{2-flush}
There is a smallest enclosing \para\ $abcd$ such that
\begin{compactenum}
\item 
 at least one of the sides $a$ and $c$ touches an
edge of $P$, and
\item 
 at least one of the sides $b$ and $d$ touches an
edge of $P$.  
\end{compactenum}
\end{lemma}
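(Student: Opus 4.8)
The plan is to treat the two pairs of parallel sides as independent directional parameters and to optimize one pair at a time, exploiting that flushness of a pair depends on \emph{its own direction alone}. First I would reduce to \para s whose four sides all support $P$: an enclosing \para\ is the intersection of two slabs, one bounded by $a,c$ and one by $b,d$, and for fixed slab directions the area can only decrease when each slab is shrunk until bounded by supporting lines of $P$. So I may parameterize the candidate \para s by the two directions alone, say $\mathbf v$ for the pair $a,c$ and $\uu$ for the pair $b,d$, with the side \emph{positions} then determined as supporting lines.

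The key computation is a one-variable minimization. Placing $\uu$ horizontally and writing $\mathbf v=(\cos\psi,\sin\psi)$, the two-slab area formula $w_1 w_2/|\sin\beta|$ specializes to the area $w_\uu\,\bigl|\Delta y\cot\psi-\Delta x\bigr|$, where $w_\uu$ is the (fixed) width of $P$ perpendicular to $\uu$ and $(\Delta x,\Delta y)$ is the vector from the contact point of $a$ to the contact point of $c$. On any interval of directions $\mathbf v$ lying between two consecutive edge directions of $P$ these contact points are fixed, so this is the absolute value of a function that is monotone in $\psi$; its only interior zero would force $\mathbf v$ parallel to $(\Delta x,\Delta y)$, i.e.\ a degenerate zero-width slab that cannot contain $P$. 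Hence the area is monotone on each such interval (or constant, when $\Delta y=0$, i.e.\ $(\Delta x,\Delta y)\parallel\uu$), and its minimum over all $\mathbf v$ for fixed $\uu$ is attained at an edge direction---exactly where one of $a,c$ becomes flush. By symmetry the same holds when $\mathbf v$ is fixed and $\uu$ varies.

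Finally I would run the coordinatewise argument. Start from a globally smallest enclosing \para\ at $(\uu^\ast,\mathbf v^\ast)$. Keeping $\uu^\ast$ fixed and moving $\mathbf v$ to the minimizing edge direction $\mathbf v_0$ cannot increase the area, so $(\uu^\ast,\mathbf v_0)$ is still globally smallest and has one of $a,c$ flush. Because flushness of the pair $a,c$ is a property of $\mathbf v_0$ alone, I then freeze $\mathbf v_0$ and apply the symmetric one-variable statement to $\uu$: moving $\uu$ to its minimizing edge direction $\uu_0$ again does not increase the area and makes one of $b,d$ flush, while leaving $a,c$ flush. This yields a globally smallest enclosing \para\ that is flush in both pairs, as required.

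The step I expect to be most delicate is the one-variable minimization. The subtle points are verifying that the interior critical direction is genuinely excluded as a degenerate slab, so that the minimum is forced onto a breakpoint, and handling the boundary and constant cases ($\Delta y=0$) where the area is constant along an interval and a flush endpoint must be selected deliberately. Once that monotonicity-on-each-interval claim is pinned down, the coordinatewise descent and the independence of the two flushness conditions make the rest routine.
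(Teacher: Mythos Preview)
Your argument is correct, but the paper proves this lemma by a different and much shorter route that exploits the conjugate-pair machinery already in place. The paper observes that a smallest enclosing parallelogram $abcd$ is in particular smallest among those S-anchored to the direction of $b,d$, so by \Cref{lem:optimality-conjugate} there is a conjugate inscribed quadrilateral $ABCD$ with $A\in a$, $C\in c$, and $AC$ parallel to $b,d$. One then simply tilts the pair $a,c$ around the fixed points $A,C$; because $|b|=|AC|$ stays constant and the distance between $b$ and $d$ is unchanged, the area is \emph{constant} during this tilt, and it can proceed until $a$ or $c$ hits an edge. Repeating the argument with the roles of the two pairs swapped gives the second flushness.

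Your direct monotonicity analysis is more self-contained---it does not rely on the existence of the conjugate pair---at the cost of a more explicit computation and the delicate exclusion of the interior zero. It is worth noting that the two arguments meet: the paper's ``tilt without changing area'' is precisely your constant case $\Delta y=0$, which the conjugate-pair condition $AC\parallel\uu$ guarantees from the outset. In effect the paper jumps straight to the flat part of your area function, whereas you establish monotonicity on every interval and then descend. One small sharpening of your write-up: the reason the interior zero is excluded is not merely that the slab there would be degenerate, but that the contact points $p,q$ would already have changed before $\mathbf v$ reaches the direction of $pq$; equivalently, that direction cannot lie in the open interval on which $p,q$ are the supporting vertices.
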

\begin{proof}
A smallest enclosing \para\ $abcd$ must be a smallest enclosing
 \para\ \emph{\anc} to the direction of $b$ and $d$, and hence there
 must be a conjugate pair $(ABCD,abcd)$, see \Cref{fig:4gon}.
If the side $a$ or $c$ doesn't already touch an edge of $P$, these
sides can be tilted around $A$ and $C$ without changing the area,
until one of the sides hits an edge of $P$.

Afterwards, we can apply the same argument to the direction of $a$ and
$c$ and ensure that $b$ or $d$ touches an edge of $P$.
\end{proof}

As a consequence of part 2, when looking for
the smallest enclosing \para, it is sufficient to look at \para s that
are S-\anc\ to the directions of the edges of $P$.
We have already mentioned that a largest contained \qu\ can be found
among those \qu s that use only vertices of $P$. Thus it is sufficient
to look at \anc\ \qu s for which $A$ and $C$ lie at vertices.
This explains the places where areas are compared against the current
minimum or maximum in the following program.

\goodbreak
\subsection{Pseudocode}
\label{sec:pseudocode}

\begin{tabbing}
$a_0 := a := 1$\\
$c:= 2$\\
\textbf{while} $\area p_ap_{a+1}p_{c+1}>
\area p_ap_{a+1}p_{c}$:\\
  \qquad\=
 $c_0 := c := c+1$  (find the point $p_c$ with supporting line parallel to
 $p_ap_{a+1}$.)\\
$\textit{next\_AC} := \textrm{``A''}$
(The corner $A$ slides on the edge $p_ap_{a+1}$.)\\
$\uu^{\mathrm{AC}}
 := p_{c}-p_{a+1}$ (the direction $\uu$ where $A$ hits the next vertex)\\
$b := a$\\
\textbf{while} $\area p_cp_ap_{b+1}>
\area p_cp_ap_{b}$:\\
\>
 $b := b+1$  (find the point $p_b$ with supporting line parallel to
 $p_ap_c$.)\\
$d := c$\\
\textbf{while} $\area p_ap_cp_{d+1}>
\area p_ap_cp_{d}$:\\
  \qquad\=
 $d := d+1$ (find the other point $p_d$ with supporting line parallel to
 $p_ap_c$.)\\
\textbf{if} $\area p_bp_{b+1}p_{d+1}\le
\area p_bp_{b+1}p_{d}$:\\
\>$\textit{next\_BD} := \textrm{``B''}$
(The \para\ side $b$ hits an edge of $P$ before $d$ does.)\\
\>
$\uu^{\mathrm{BD}}
 := p_{b+1}-p_{b}$\\
\textbf{else:}\\
\>$\textit{next\_BD} := \textrm{``D''}$
(The \para\ side $d$ hits an edge of $P$ before $b$ does.)\\
\>$\uu^{\mathrm{BD}} := p_{d}-p_{d+1}$\\
$\textit{maxarea} := 0$ (the area of the largest contained \qu)\\
$\textit{minarea} := \infty$ (the area of the smallest enclosing \para)\\
\textbf{repeat}\\
  \qquad\=\+
%
\textbf{if} $\det(\uu^{\mathrm{BD}},\uu^{\mathrm{AC}})
\ge0$:\\
  \qquad\=\+
(The \para\ side
$b$ or $d$ touches an edge of $P$.)\\
\textbf{if} $\textit{next\_AC} = \textrm{``A''}$:\\
\qquad\=\+
construct the point $A$ on the line $p_ap_{a+1}$ such that $p_cA$ is
parallel to $\uu^{\mathrm{BD}}$\\
\<\<\<($*$)\>\>\>
$\textit{minarea} := \min \{
\textit{minarea},
2\cdot\area Ap_bp_cp_{d}$\}
\-\\
\textbf{else}:\\
\>\+
construct the point $C$ on the line $p_cp_{c+1}$ such that $p_aC$ is
parallel to $\uu^{\mathrm{BD}}$\\
\<\<\<($**$)\>\>\>
$\textit{minarea} := \min \{
\textit{minarea},
2\cdot\area p_ap_bCp_{d}$\}
\-\\
\textbf{if} $\textit{next\_BD} = \textrm{``B''}$: \=$b := b+1$\\
\textbf{else}:
\>$d := d+1$\\
\textbf{if} $\area p_bp_{b+1}p_{d+1}\le
\area p_bp_{b+1}p_{d}$:\\
\qquad\=
$\textit{next\_BD} := \textrm{``B''}$
(The \para\ side $b$ hits an edge of $P$ before $d$ does.)\\
\>
$\uu^{\mathrm{BD}}
 := p_{b+1}-p_{b}$\\
\textbf{else:}\\
\>$\textit{next\_BD} := \textrm{``D''}$
(The \para\ side $d$ hits an edge of $P$ before $b$ does.)\\
\>$\uu^{\mathrm{BD}} := p_{d}-p_{d+1}$\\
\<\textbf{else}:
(The sliding corner $A$ or $C$ reaches a vertex of $P$.)\\
\textbf{if} $\textit{next\_AC} = \textrm{``A''}$: \=$a := a+1$\\
\textbf{else}:
\>$c := c+1$\\
$\textit{maxarea} := \max \{
\textit{maxarea},
\area p_ap_bp_cp_{d}$\}\\
\textbf{if} $\area p_ap_{a+1}p_{c+1}\le
\area p_ap_{a+1}p_{c}$:
(Which of $A$ and $C$ slides on an edge of $P$?)\\
\qquad\=$\textit{next\_AC} := \textrm{``A''}$
(The corner $A$ slides on the edge $p_ap_{a+1}$.)
\\
\>$\uu^{\mathrm{AC}}
 := p_{c}-p_{a+1}$ (the direction $\uu$ where $A$ hits the next vertex)\\
\textbf{else:}\\
\>$\textit{next\_AC} := \textrm{``C''}$
(The corner $C$ slides on the edge $p_cp_{c+1}$.)
\\
\>$\uu^{\mathrm{AC}} := p_{c+1}-p_{a}$ (the direction $\uu$ where $C$ hits the next vertex)
\-\-\\
\textbf{until} $(a,c)=(c_0,a_0)$
\end{tabbing}
The area of $ Ap_bp_cp_{d}$ in line ($*$)
can be computed by the formula
\begin{displaymath}
 \pm  \frac12 \cdot \frac
{\det
\bigl(p_{a+1}-p_a
, p_c-p_a\bigr) \cdot
\det\bigl(\uu^{\mathrm{BD}}, p_d-p_b\bigr)
}
{\det\bigl(p_{a+1}-p_a
, \uu^{\mathrm{BD}}\bigr)},
\end{displaymath}
and for 
 the area of $p_ap_bCp_{d}$
in~($**$), we replace
$p_{a+1}-p_a$
by $p_{c+1}-p_c$ in two places. 

\section{The Largest Contained 4-Gon}
\label{sec:4gon}

We give here the specialized algorithm for computing the area
of the largest  4-gon contained in a convex polygon $P$.

In contrast to the algorithm that is sketched in
\Cref{sec:conjugate},
and also differently from \Cref{sec:pseudocode},
we start as in
\Cref{sec:calipers}
 with the points  $A(0)$ and $C(0)$ that have
horizontal supporting lines, 
see \Cref{fig:movement}.
\begin{tabbing}
  \qquad\=\+
Let $p_{a_0}$ be the leftmost vertex among the lowest vertices of $P$\\
Let $p_{c_0}$ be the rightmost vertex among the highest vertices of $P$\\
$a := b := a_0$\\
$c := d := c_0$\\
$\textit{maxarea} := 0$\\
\textbf{repeat}\\
  \qquad\=\+
\textbf{while} $\area p_cp_ap_{b+1}>
\area p_cp_ap_{b}$:\\
  \qquad\=
 $b := b+1$  (find the point $p_b$ with supporting line parallel to
 $p_ap_c$.)\\
\textbf{while} $\area p_ap_cp_{d+1}>
\area p_ap_cp_{d}$:\\
  \qquad\=
 $d := d+1$ (find the other point $p_d$ with supporting line parallel to
 $p_ap_c$.)\\
$
\textit{maxarea} := \max \{
\textit{maxarea},
\area p_ap_bp_cp_{d}$\}\\
\textbf{if} $\area p_ap_{a+1}p_{c+1}\le
\area p_ap_{a+1}p_{c}$: (advance $(a,c)$ to the next antipodal pair)\\
\>$a := a+1$\\
\textbf{else}:\\
\>$c := c+1$
\-\\
\textbf{until} $(a,c)=(c_0,a_0)$
\end{tabbing}
The main loop is driven by the antipodal pair $(a,c)$.
In each iteration, either $a$ or $c$ is advanced to the next vertex.
This is essentially
the program for reporting
all antipodal pairs of {vertices} from
\cite[Section~4.2.3]{PS}, except that we need not be careful
about getting all such pairs if
$P$ has parallel edges. 
In the two inner loops, the points $b$ and $d$ that are farthest from the line $ac$ are updated.\looseness-1


\section{The
Smallest Enclosing Parallelogram 
According to
Schwarz, Teich, Welzl, and Evans~\cite{stwe-fmep-94}}
\label{sec:schwarz}

The algorithm of Schwarz et al.~\cite{stwe-fmep-94} is similar in
spirit to our algorithm in constructing a sequence of parallelograms $abcd$ by advancing the
direction to which $b$ and $d$ are parallel, following
the  rotating-calipers technique.
They also sketch an application of
smallest enclosing \para s to signal compression~\cite[Section~4]{stwe-fmep-94},
and 
the appendix
gives details about a C++
implementation. 

There is one difference in the setup. We explain it with our notation:
By \Cref{2-flush} (\cite[Lemma~1]{stwe-fmep-94}), it suffices to
look for \para s where
at least one of the sides $b$ and $d$ touches a
whole edge of~$P$,
\emph{and}
at least one of the sides $a$ and $c$ touches a whole
edge of $P$.
This means that two adjacent \para\ sides must touch edges of~$P$.
Now, the algorithm of
~\cite{stwe-fmep-94}
only considers those \anc\ parallelograms where these two sides
are 
$b$ and
$c$, 
like in \Cref{fig:schwarz}a.
This restriction is compensated by sweeping over an
angular range of $360^\circ$ instead of  $180^\circ$.

\Cref{fig:schwarz} illustrates a few steps of the algorithm.  After
finding the \para\ of \Cref{fig:schwarz}a and computing its area, the
algorithm of Section~\ref{sec:pseudocode}, when specialized for the
smallest containing \para, would next look at the \para\ of
\Cref{fig:schwarz}b. This \para\ is skipped in Schwarz et
al.~\cite{stwe-fmep-94} at this point, but this omission is no mistake: This
parallelogram was already considered before with rotated labels, when
$b$ touched $p_2p_3$ and $c$ touched $p_6p_7$. The next \para\ is not
shown: $d$ touches the edge $p_{12}p_{13}$ and $a$ touches
$p_{2}p_{3}$. This
\para\ is also skipped by Schwarz et al.~\cite{stwe-fmep-94} at this point,
but
it is considered later when
$b$ touches $p_{12}p_{13}$ and $c$ touches $p_{2}p_{3}$.
\Cref{fig:schwarz}b shows the next \para. It is a largest
S-\anc\ \para\ when the side $b$ is \anc, but it is not a largest
S-\anc\ \para\ when the side $a$ is \anc, because the dashed antipodal pair
$p_7p_{13}$ is not parallel to $a$ and $c$.
Hence it cannot be a largest enclosing \para.
The algorithm of
 Schwarz et al.~\cite{stwe-fmep-94} skips this \para\ and does not
 consider it at all.

\begin{figure}[htb]
  \centering
  \includegraphics{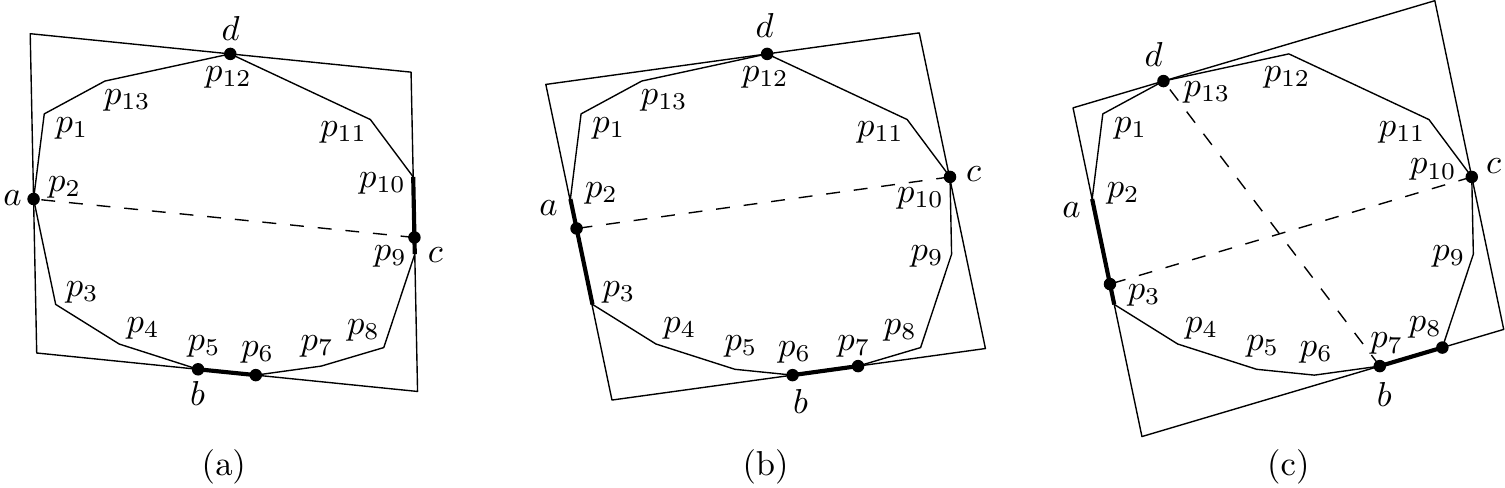}
  \caption{Three snapshots of the algorithm}
  \label{fig:schwarz}
\end{figure}

 This setup makes the algorithm simple and elegant:
Most case distinctions
of Section~\ref{sec:pseudocode} can
be eliminated, together with the flags
\textit{next\_AC} and
\textit{next\_BD}. 
Like in Section~\ref{sec:4gon},
the algorithm can be structured into two nested loops.
The outer loop iterates over the edges of $P$
through which $b$ goes, and the inner loop updates the antipodal
pair $AC$ parallel to the direction of~$b$.

\begin{tabbing}
\quad\=\qquad\=\qquad\=\qquad\=\qquad\=\qquad\=\qquad\=\kill
\>\+
$b := 1$; $c := 2$; $d := 2$; $a := 3$\\
\textbf{while}
$\area p_c p_{c+1}p_{a+1}>
\area p_c p_{c+1}p_{a}$: (initialization)\\
  \>
 $a := a+1$ (find the opposite point $p_a$ with supporting line parallel to
 $p_c p_{c+1}$)\\
$\textit{minarea} := \infty$\\
\textbf{for} $b:= 1 \ldt n$:\\
  \>\+
\textbf{while} $\area p_b p_{b+1}p_{d+1}>
\area p_b p_{b+1}p_{d}$:\\
  \>
 $d := d+1$ (update the point $p_d$ opposite to
 $p_b p_{b+1}$)\\
  \textbf{while}
$\area p_b p_{b+1}p_{a}>
\area p_b p_{b+1}p_{c+1}$:
\+\\
$c := c+1$ \hbox spread -1ex{(search for antipodal pair $AC$ parallel to $\uu=p_bp_{b+1}$, with
$C$ on an edge)}\\
\textbf{while} \=
$\area p_c p_{c+1}p_{a+1}>
\area p_c p_{c+1}p_{a}$\\
\>\llap{\textbf{or} (}$\area p_c p_{c+1}p_{a+1}=
\area p_c p_{c+1}p_{a}$
\textbf{and}
$\area p_b p_{b+1}p_{a+1}\ge
\area p_b p_{b+1}p_{c}$):\\
\>  \quad
 $a := a+1$ (update the point $p_a$ opposite to
 $p_c p_{c+1}$)\-\\
\textbf{if} 
$\area p_b p_{b+1}p_{a}\ge
\area p_b p_{b+1}p_{c}$:\\
\qquad\=\+
construct the point $C$ on the line $p_cp_{c+1}$ such that $p_aC$ is
parallel to $\uu=p_bp_{b+1}$\\
$\textit{minarea} := \min \{
\textit{minarea},
2\cdot\area p_ap_bCp_{d}$\}
\end{tabbing}
The algorithm of
~\cite{stwe-fmep-94} actually uses
a precomputed list $L=((p_i,p_{i+1}),p_{q_i})_{i=1\ldt n}$ that stores for each edge 
$(p_i,p_{i+1})$ of $P$ an antipodal vertex $p_{q_i}$ that
is farthest away from the line through
$(p_i,p_{i+1})$. By contrast, the algorithm above 
updates the  vertex $p_d$ opposite to $p_b p_{b+1}$ and the vertex $p_a$
 opposite to $p_c p_{c+1}$ on the fly.
The treatment of degenerate cases is also different.

\goodbreak



\end{document}